\title{Sublinear Explicit Incremental Planar Voronoi Diagrams}
\author{
	Elena Arseneva\thanks{Saint Petersburg State University} \qquad
	John Iacono\thanks{Universit\'{e} libre de Bruxelles; New York University} \qquad
	Grigorios Koumoutsos\thanks{Universit\'{e} libre de Bruxelles} \\
	Stefan Langerman\samethanks[3] \qquad
	Boris Zolotov\samethanks[1]
}
\date{}
\begin{document}

\maketitle

\begin{abstract}
A data structure is presented that explicitly maintains the graph of a Voronoi diagram of $N$ point sites in the plane or the dual graph of a convex hull of points in three dimensions while allowing insertions of new sites/points. 
Our structure supports insertions in $\Ot (N^{3/4})$ expected amortized time, where $\Ot$ suppresses polylogarithmic terms. 
This is the first result to achieve sublinear time insertions; previously it was shown by Allen et~al.~that $\Theta(\sqrt{N})$ amortized combinatorial changes per insertion could occur in the Voronoi diagram but a sublinear-time algorithm was only presented for the special case of points in convex position.
\end{abstract}
	
\section{Introduction}

Voronoi diagrams and convex hulls are two keystone geometric structures of central importance to computational geometry.
We focus the description here on planar Voronoi diagrams of points; the results can be extended to the dual graph of 3D-convex hulls using a standard lifting trick. Several algorithms, based on various different techniques, have been developed over the years that compute the Voronoi diagram of a set of $N$ points in optimal time $O(N \log N)$~\cite{DBLP:books/el/00/AurenhammerK00}. Surprisingly however, the problem of maintaining dynamically a Voronoi diagram subject to insertions/deletions of points is not well understood. 

\paragraph{Incremental Voronoi Diagrams.} In this paper we focus on the problem of maintaining the Voronoi diagram under insertion of new sites. In Allen et al.\ \citAllen\ it was observed that while there could be a linear number of changes to the embedded Voronoi diagram with each site insertion, this is not equivalent to the number of combinatorial changes (i.e., edge insertions and deletions) to the graph of the Voronoi diagram. What is more, it was proved that the maximum number of combinatorial changes per site insertion is $\Theta(\sqrt{N})$ amortized. This opened the possibility to maintain the Voronoi diagram graph under insertions with sublinear update time. Allen et al.~\citAllen\ achieved this for the restricted case where the sites are in convex position, where they designed a data structure with $O(\sqrt{N} \log^7 N)$ amortized insertion time. This result relies crucially on the fact that the Voronoi diagram of a set of points in convex position is a tree. Other more restricted special cases have been studied where the number of combinatorial changes is $\Theta(\log N)$~\cite{DBLP:journals/algorithmica/AronovBDGILS18,Pet10}.

\paragraph{Our Result.} In this work, we provide a data structure that explicitly maintains the graph of a Voronoi diagram of arbitrary point sites in $\br^2$ while allowing insertions of new sites in $\Ot (N^{3/4})$ amortized time, where $\Ot$ suppresses polylogarithmic terms. This is the first data structure supporting insertions in sublinear time for this problem.

We crucially note that we are interested in maintaining explicitly the Voronoi diagram. In particular, we store the diagram as a graph in adjacency list format on which primitive operations, including links and cuts, are performed. This is different than just maintaining a data structure that answers nearest neighbor queries. This case can be solved dynamically in polylogarithmic time by Dynamic Nearest Neighbor (DNN) structures~\cite{DBLP:journals/jacm/Chan10,KaplanMRSS17,Chan19}; this relies heavily on the fact that nearest neighbor is a decomposable search problem, whereas maintaining the Voronoi diagram is clearly not. In fact, here we use those DNN structures as subroutines for solving our problem.

We remark that maintaining the Voronoi diagram in sublinear time in the fully dynamic setting (i.e., with both insertions and deletions) is hopeless as the
$\Theta\lrp{\sqrt{N}}$ amortized bound of combinatorial changes for insertions becomes $\Theta\lrp{N}$.

\subsection{Brief description of our approach}
We now give a high-level overview of our approach and the organization of the rest of the paper. 

We store the Voronoi diagram as a combinatorial graph, which allows to quickly retrieve any geometric information if needed. 

Suppose we wish to insert a new site $s_N$ into the diagram, and let $f$ be the cell of the diagram that contains $s_N$. This cell can be found in polylogarithmic time using DNN structures. It is known~\citAllen\ that the affected cells that need to be updated, i.e. that undergo \emph{combinatorial changes}, form a connected region including cell $f$. To update the diagram, we discover all the affected cells by a variation of the breadth-first search starting from $f$. 

\paragraph{Small and big cells.} The main high-level idea is to divide the cells of the Voronoi diagram into \emph{small} and \emph{big}: small are the ones that have at most $N^{1/4}$ vertices and big are the ones that have more. For a Voronoi cell $f$, by the \paws of $f$ we denote the Voronoi vertices connected to the boundary of $f$ by one edge. What we do to process small and big cells is different, but is based on the following fact. Given an affected cell $f$ the neighboring cells of $f$ that change can be identified by finding all the \paws of $f$ whose Voronoi circles contain the newly inserted site $s_N$.  

A small cell is small enough to be processed by simply checking all $O(N^{1/4})$ Voronoi circles of its \paws in a brute force way. This takes $O(N^{1/4} \polylog(N)) = \Ot(N^{1/4})$ time per small cell. Since the amortized number of cells changing is $O(\sqrt{N})$, it takes $\Ot(\sqrt{N} \cdot  N^{1/4})= \Ot(N^{3/4})$ amortized time to perform all updates involving small cells.

For each big cell with $b_i$ neighbors, we store a circular linked list of $\Theta(b_i/N^{1/4})$ data structures, each associated with the consecutive range of $O(N^{1/4})$ of its paws.
Each structure stores the Voronoi circles for those \paws that are relevant. 
These Dynamic Circle-Reporting structures (DCRs) are known variants of the DNN structure that support insertion and deletion of circles in polylogarithmic time, and given a query point,
report all $k$ circles containing the point in time $\Ot(k)$. 
Overall, operations involving a big cell require polylogarithmic time in the number of affected neighbors. Since there are at most $O(N^{3/4})$ big cells, the total time to process them is $O(N^{3/4} \polylog(N)) = \Ot (N^{3/4})$. 

Overall, we need $\Ot(N^{3/4})$ amortized time for updating both small and big cells, thus the main result follows.

\paragraph{Note on previous work.} In a preliminary version of this paper~\cite{AIKLZ19} we presented the same result using a randomized data structure and providing bounds on the expected running time; the randomization came solely from the shallow cuttings used in the DNN structure~\cite{DBLP:journals/jacm/Chan10}. Since the initial development of this work, Chan presented DNN structures that use shallow cuttings deterministically~\cite{Chan19} (implicit in~\cite{ChanT16}); using this structure all DNN and DCR structures used in this paper can be implemented deterministically. As a result, our overall structure is deterministic. 

\paragraph{Organization.} The rest of the paper is organized as follows. In Section~\ref{sec:defflarb} we characterize affected cells that undergo combinatorial changes. In Section~\ref{sec:descrDS} we give a detailed description of our data structure. In Section~\ref{sec:recogGeneral} we present the procedure to find all the affected cells, and in Sections~\ref{sec:implement} and~\ref{sec:smallBig} we describe the procedure to implement the combinatorial changes and update the data structure accordingly. 

\section{Preliminaries and definitions}

We begin with standard definitions related to Voronoi diagrams and their basic properties. 

	Let $S \coloneqq \{s_1, s_2, \ldots, s_N\}$ be a set of $N$ distinct points in $\br^2$; these points are called \emph{sites}. Let $\mathrm{dist} (\cdot, \cdot)$ denote the Euclidean distance between two points in  $\br^2$. We assume that the sites in $S$ are in {\itshape general position}, that is, no four sites lie on a common circle.

\begin{definition} The \emph{Voronoi diagram} of $S$ is the subdivision of $\br^2$ into $N$ {cells}, called \emph{Voronoi cells}, one cell for each site in $S$, such that a point $q$ lies in the Voronoi cell
of a site $s_i$ if and only if
	$$\mathrm{dist} (q, s_i) < \mathrm{dist} (q, s_j)$$
	for each $s_j \in S$ with $j \ne i$. 
\end{definition}

Let  $f_i$ denote the Voronoi cell of a site $s_i$. Edges of the Voronoi diagram, called \emph{Voronoi edges,} are portions of bisectors between two sites which are the common boundary of the corresponding Voronoi cells. \emph{Voronoi vertices} are points where at least three Voronoi cells meet. The \emph{Voronoi circle} of a Voronoi vertex $v$ is the circle passing through the sites whose cells are incident to $v$, see Figure~\ref{fig:voronoiCircle}. Vertex $v$ is the center of its Voronoi circle.

\begin{figure}[h] \centering
\begin{subfigure}[b]{0.46\textwidth} \centering
	\includegraphics[width=5.3cm]{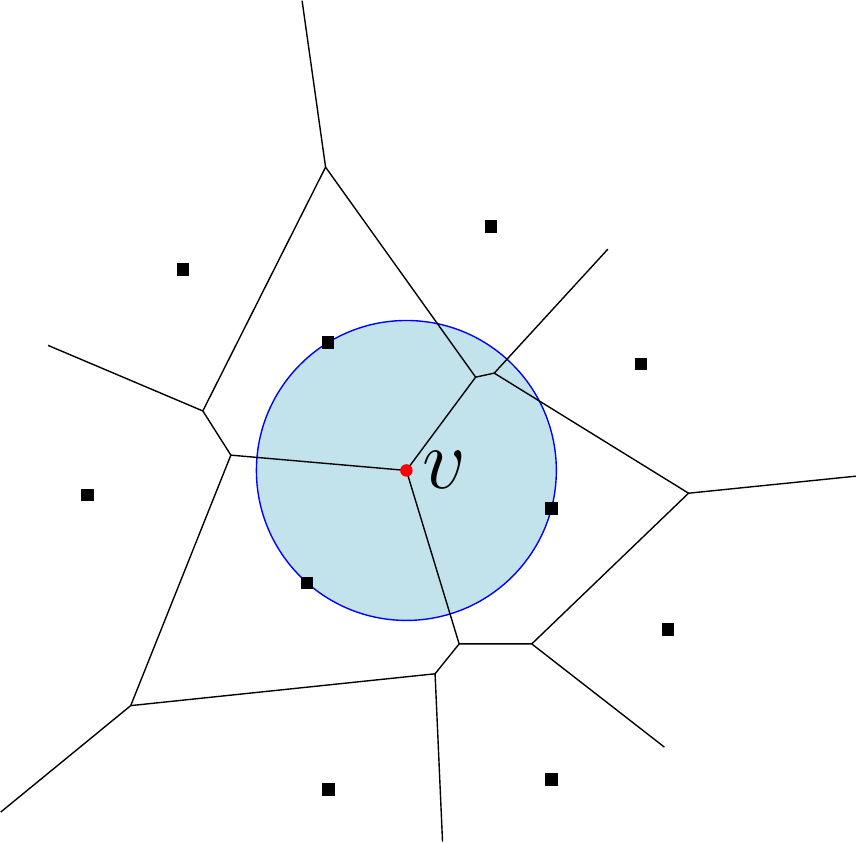}
	\caption{ } \label{fig:voronoiCircle}
\end{subfigure} ~~
\begin{subfigure}[b]{0.44\textwidth} \centering
	\includegraphics[width=5cm]{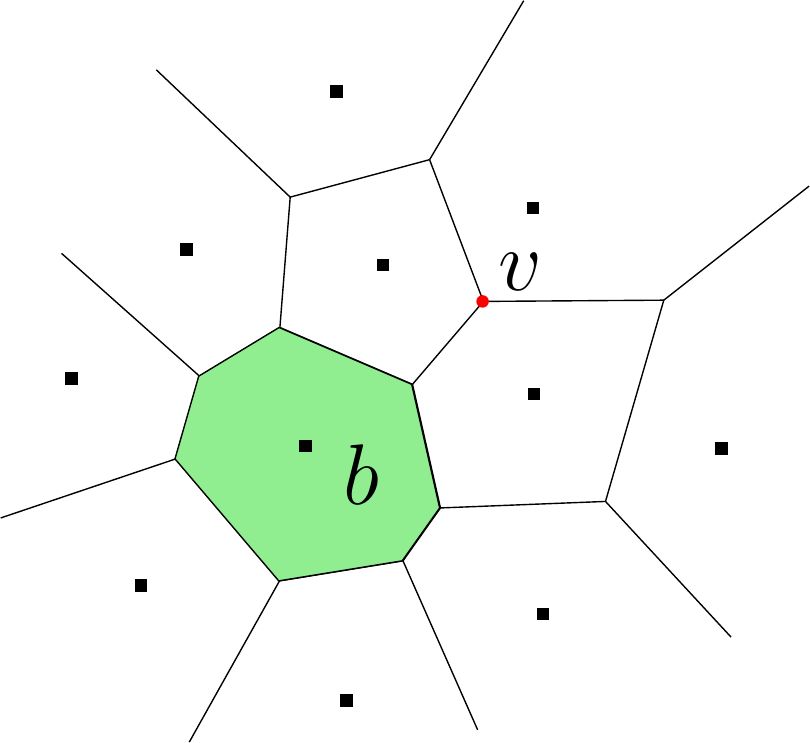}
	\caption{ } \label{fig:Paw}
\end{subfigure}
	\caption{A Voronoi diagram and (a) a Voronoi circle of vertex $v$ and (b) \paw $v$ of Voronoi cell $f$.}
\end{figure}

Since the sites are in general position, each Voronoi vertex has degree three. Each Voronoi edge is either a segment or a ray and the graph of the Voronoi diagram formed by its edges and vertices is planar and connected.

The next three definitions are specific to our data structure.

\begin{definition} \label{def:cellSize}
	The \emph{size} of a Voronoi cell is the number of Voronoi edges constituting its boundary. We denote the size of cell $f$ by $|f|$.
\end{definition}

\begin{definition} \label{def:bigCell}
	A Voronoi cell of a Voronoi diagram with $N$ sites is called a \emph{big cell} if it has size more than $\nof$. Otherwise it is called \emph{small}.
\end{definition}

\begin{definition} \label{def:Paw}
	 The \emph{\paws}of Voronoi cell $f$ are the Voronoi vertices that are connected to the boundary of $f$ by an edge and are not themselves on the boundary of $f$, see Figure~\ref{fig:Paw}. A \paw is called \emph{relevant} if it is not incident to a big cell.
\end{definition}

\subsection{Combinatorial changes to the Voronoi diagram and the flarb operation}
\label{sec:defflarb}

We now overview the definitions and results from Allen et al.~\citAllen\ that we need to present our approach. In order to prove the $\Theta(\noh)$ bound on the number of combinatorial changes caused by insertion of a site, a graph operation called \emph{flarb} is introduced.

Let $G$ be a planar 3-regular graph embedded in $\br^2$ without edge crossings (edges are not necessarily straight-line). Let $\mC$ be a simple closed Jordan curve in $\br^2$. 

\begin{definition} 
Curve $\mC$ is called \emph{flarbable} for $G$ if: \end{definition}

\begin{itemize}
	\item the graph induced by vertices inside the interior of $\mC$ is connected,
	\item $\mC$ intersects each edge of $G$ either at a single point or not at all,
	\item $\mC$ passes through no vertex of $G$, and
	\item the intersection of $\mC$ with each face of $G$ is path-connected.
\end{itemize}

For example, curve $\mC$ on Figure~\ref{fig:notflarb} is not flarbable since the intersection between its interior (shaded green) and the highlighted face (red) consists of two disconnected parts. In Figure~\ref{fig:flarb} curve $\mC'$ is flarbable.

\begin{figure}[h]
\centering
	\begin{subfigure}[t]{0.31\textwidth}
	\centering
	\includegraphics{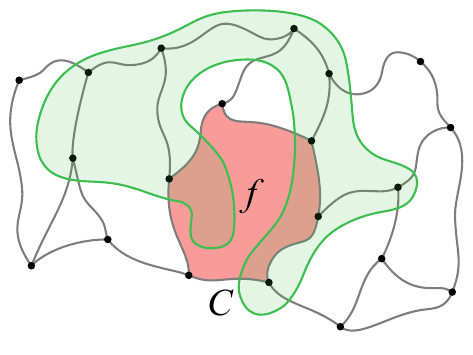}
	\caption{ }
	\label{fig:notflarb}
	\end{subfigure}
~
	\begin{subfigure}[t]{0.31\textwidth}
	\centering
	\includegraphics{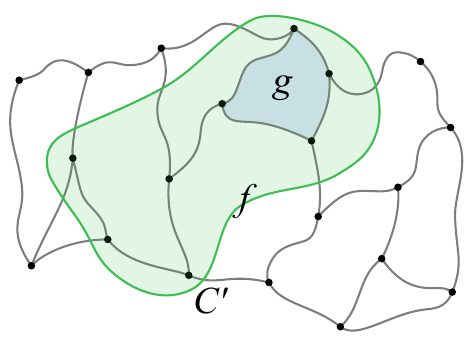}
	\caption{ }
	\label{fig:flarb}
	\end{subfigure}
~
	\begin{subfigure}[t]{0.31\textwidth}
	\centering
	\includegraphics{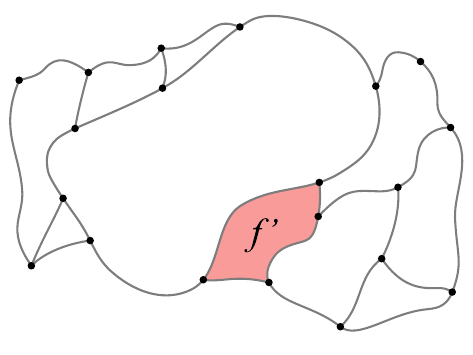}
	\caption{ }
	\label{fig:afterflarb}
	\end{subfigure}
\caption{Examples of (a) not flarbable curve (b) flarbable curve;
	     (c) result of applying the flarb operation
	     for curve $\mC'$.}
\label{fig:exflarb}
\end{figure}

Given a graph $G$ and a curve $\mC$ flarbable for $G$, the \emph{flarb} operation is, informally, removing part of $G$ that is inside $\mC$ and replacing it with $\mC$. Formally, the flarb operation for $G$ and $\mC$ is defined as follows (see Figure~\ref{fig:flarb},~\ref{fig:afterflarb}):

\begin{itemize}
	\item For each edge $e_i \in G$ that intersects $\mC$ let $u_i$ be its vertex lying inside $\mC$ and $v_i$ its vertex outside $\mC$. Create a new vertex $w_i = \mC \cap e_i$ and connect it to $v_i$ along $e_i$.
	\item Connect consecutive vertices $w_i$ along $\mC$.
	\item Delete all the vertices and edges inside $\mC$.
\end{itemize}

Let $\mG (G,\mC)$ denote the graph obtained by applying the flarb operation to graph $G$ and curve $\mC$.

\begin{prop} The following holds for graph $\mG ( G, \mC )$:
	(a) $\mG ( G, \mC )$ has at most two more vertices than $G$ does;
	(b) $\mG ( G, \mC )$ is a 3-regular planar graph;
	(c) $\mG ( G, \mC )$ has at most one more face than $G$ does.
\end{prop}

\begin{proof}
	Items (a) and (b) are proved in~\cite{DBLP:journals/dcg/AllenBIL17}, Lemma 2.2. To prove (c) note that there is one new face bounded by the cycle added along $\mC$ while performing the flarb. All the other faces of $G$ are either deleted, left intact, or cropped by $\mC$; these operations obviously do not increase the number of faces.
\end{proof}

\begin{theorem}[\citAllen] \label{thm:existsFlarb}
	Let $G$ be a graph of the Voronoi diagram of a set of $N-1$ sites $s_1 \ldots s_{N-1}$. For any new site $s_N$ there exists a flarbable curve $\mC$ such that the graph of the Voronoi diagram of sites $s_1 \ldots s_N$ is $\mG (G, \mC)$.
\end{theorem}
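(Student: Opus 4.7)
The plan is to take $\mC$ to be the boundary of the new Voronoi cell $f_N$ in the diagram of $s_1,\ldots,s_N$, then verify the four flarbability conditions and check that the flarb operation produces exactly the new diagram. Since every Voronoi cell is convex, $\mC$ is a simple closed Jordan curve; if $f_N$ is unbounded I would first enlarge $G$ by intersecting it with a sufficiently large disk containing all bounded Voronoi vertices (or equivalently work on the sphere), and the argument is unchanged.

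The first step is to check the four flarbability conditions. A Voronoi vertex $v$ of $G$ lies inside $\mC$ if and only if $s_N$ lies strictly inside $v$'s Voronoi circle; the cells incident to such vertices are precisely the affected cells of \citAllen, which by their result form a connected region of the plane, and from this one deduces that the induced subgraph of $G$ on these vertices is connected. Each edge of $G$ lies on some bisector line $B(s_i,s_j)$; since $f_N$ is convex, $\mC$ meets this line in at most two points, and the edge itself, being connected, in at most one, which gives the single-point-crossing condition. General position guarantees that no old Voronoi circle passes through $s_N$, so $\mC$ avoids every vertex of $G$. Finally, the portion of $\mC$ lying inside an old face $f_i$ must lie on the bisector $B(s_N,s_i)$, the only bisector that can separate $s_N$ from $s_i$, and the intersection of a line with a convex cell is a single segment, so this portion is path-connected.

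The second step is to identify $\mG(G,\mC)$ with the new diagram. Outside $\mC$ nothing changes, since every such point is closer to its old owner than to $s_N$ and so lies in the closure of its original cell. Inside $\mC$ there are no edges of the new diagram, matching the deletion step of the flarb. Each new vertex $w_i$ is the intersection of an old bisector $B(s_i,s_j)$ with $\mC$, hence is equidistant from $s_i$, $s_j$ and $s_N$, so is a genuine new Voronoi vertex. Finally, two consecutive $w_i$'s along $\mC$ lie on a common sub-arc of some bisector $B(s_N,s_k)$, so the edges added along $\mC$ are exactly the new boundary edges of $f_N$.

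The hard part will be the first flarbability condition: showing that the vertices of $G$ lying strictly inside $\mC$ induce a \emph{connected subgraph} of $G$ is formally stronger than connectedness of the union of their incident cells, and will require a small topological argument using the fact that each interior vertex is shared by three cells, all of which are affected. The remaining three conditions and the identification step reduce to elementary convexity and general-position facts.
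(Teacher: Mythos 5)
Your overall strategy---taking $\mC$ to be the boundary of the new cell $f_N$, checking the four flarbability conditions, and then identifying $\mG(G,\mC)$ with the new diagram---is the right one; note that the paper itself imports this theorem from Allen et al.\ without proof, so there is no in-paper argument to compare against. Two of your steps need repair, however. First, your justification of the single-crossing condition does not work as stated: a connected subsegment of the line $B(s_i,s_j)$ can perfectly well contain \emph{both} points where that line meets the convex curve $\mC$ (it would simply enter and then leave $f_N$), so connectedness of the edge buys you nothing. The correct argument is that for a point $x$ on the edge $e\subset B(s_i,s_j)$ the nearest old sites are $s_i$ and $s_j$ (tied), hence $x\in\overline{f_N}$ iff $\mathrm{dist}(x,s_N)\le\mathrm{dist}(x,s_i)$; thus $e\cap\overline{f_N}=e\cap H$ for the closed half-plane $H$ bounded by the line $B(s_N,s_i)$, which is distinct from $B(s_i,s_j)$, so $e$ meets $\partial H$, and hence $\mC$, in at most one point. (Equivalently: if $e$ contained both crossing points, both would be equidistant from $s_i$, $s_j$ and $s_N$, forcing $B(s_i,s_j)=B(s_i,s_N)$, i.e.\ $s_j=s_N$.)

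Second, the connectivity condition is genuinely missing, and the route you sketch risks circularity: the statement that the affected cells form a connected region is itself a consequence of the flarb characterization in Allen et al.\ (it appears in this paper as part of Theorem~\ref{thm:amorCost}), so it should not be invoked to establish flarbability. A self-contained argument goes through the Delaunay triangulation: a Voronoi vertex $v$ lies inside $f_N$ iff its Voronoi circle contains $s_N$, i.e.\ iff the dual Delaunay triangle is in conflict with $s_N$; the union of the conflicting triangles is star-shaped with respect to $s_N$, so any two conflicting triangles are joined by a walk (along a two-segment path through $s_N$, perturbed off vertices and edges' endpoints) that crosses only conflicting triangles, and consecutive triangles of that walk share a Delaunay edge, i.e.\ the corresponding Voronoi vertices are adjacent in $G$. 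This yields connectivity of the induced subgraph directly. The remaining two conditions and your identification of $\mG(G,\mC)$ with the new diagram are fine.
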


\paragraph{Cost of the flarb.} We want to analyze the number of structural changes that a graph undergoes when we apply the flarb operation to it. There are two basic combinatorial operations on graphs:

\begin{itemize}
	\item \emph{Link} is the addition of an edge between two non-adjacent vertices.
	\item \emph{Cut} is the removal of an existing edge.
\end{itemize}

Other combinatorial operations, for example insertion of vertex of degree 2, are assumed to have no cost.

\begin{definition}
	$\cost (G, \mC)$ is the minimum number of links and cuts needed to transform $G$ into $\mG (G, \mC)$.
\end{definition}

Note that sometimes there are less combinatorial changes needed than the number of edges intersected by $\mC$. Consider edges $e_1$, $e_2$ of $G$ crossed consecutively by $\mC$ and edge $n$ adjacent to them that reappears in $\mG (G, \mC)$ as a part $n^*$ of $\mC$. Then $n^*$ can be obtained without any links or cuts by lifting $n$ along $e_1$ and $e_2$ until it coincides with $n^*$ or (which is the same) shrinking $e_1$ and $e_2$ until their endpoints coincide with their intersections with $\mC$ (see Figure~\ref{fig:costfree}). We will call it {\itshape preserving operation}.

\begin{figure}[h] \centering
	\includegraphics[scale=1.1]{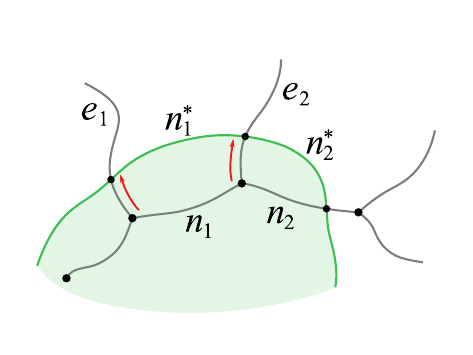}
	\caption{Edges $n_1^*$ and $n_2^*$ can be obtained without any links or cuts.}
	\label{fig:costfree}
\end{figure}

\begin{theorem}[\citAllen]
	For a flarbable curve $\mC$, it holds that \[ \cost (G, \mC) \le 12|\mathcal S (G, \mC)| + 3|\mathcal B (G, \mC)| + O(1). \]
	Where
	\begin{itemize}
		\item $|\mathcal B (G, \mC)|$ is the number of faces of $G$ wholly contained inside $\mC$ ($g$ is such a face on Figure~\ref{fig:flarb}).
		\item $|\mathcal S (G, \mC)|$ is the number of \emph{shrinking} faces—i.e. the faces whose number of edges decreases when flarb operation is applied (face $f$ is shrinking on Figures~\ref{fig:flarb}–\ref{fig:afterflarb}).
	\end{itemize}
\end{theorem}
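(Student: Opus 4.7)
The plan is to bound the cost by a direct count of cuts and links, tightened via Euler's formula applied to the subgraph interior to $\mC$ and via the preserving operation, and then reduced to the number of shrinking and enclosed faces. First I would observe that every edge of $G$ lying entirely inside $\mC$ contributes one cut, and each of the arcs of $\mC$ between consecutive crossings with $G$ contributes one link, while subdivision of crossed edges at their crossings and subsequent removal of leftover vertices of degree~$\le 2$ are free by hypothesis. Writing $E_{\mathrm{in}}$ and $E_{\mathrm{cross}}$ for the numbers of interior and crossed edges, this already gives a naive cost of $E_{\mathrm{in}}+E_{\mathrm{cross}}$.

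Next I would apply Euler's formula and the degree sum to the subgraph $H$ induced by the vertices interior to $\mC$. By the first flarbability condition $H$ is connected, and its embedded faces are exactly the $|\mathcal B|$ wholly-enclosed faces of $G$ together with one outer face. Euler then yields $E_{\mathrm{in}}=V_{\mathrm{in}}+|\mathcal B|-1$, and combining with the $3$-regularity identity $3V_{\mathrm{in}}=2E_{\mathrm{in}}+E_{\mathrm{cross}}$ eliminates $V_{\mathrm{in}}$ to give $E_{\mathrm{in}}=E_{\mathrm{cross}}+3|\mathcal B|-3$. Hence the naive cost equals $2E_{\mathrm{cross}}+3|\mathcal B|-3$.

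I would then classify the partially in/out faces by the number $k_f$ of internal edges on their inside portion, partitioning $E_{\mathrm{cross}}=q+r+s$ where $q$, $r$, and $s=|\mathcal S|$ count faces with $k_f=0$, $k_f=1$, and $k_f\ge 2$ respectively. Only $k_f=1$ faces admit a preserving operation: the unique internal edge on $\partial f$ can be morphed into the arc of $\mC$ through $f$, saving one cut and one link, unless that edge is simultaneously the unique internal edge of the neighboring face (a ``conflict''). Letting $\gamma$ be the number of conflicts, the maximum number of preservings is at least $r-\gamma$, so the cost comes down to $2q+2s+3|\mathcal B|+2\gamma+O(1)$.

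The main obstacle, and the crux of the proof, is to show $q+\gamma\le 5|\mathcal S|+O(1)$. I would attack this by tracing $\mC$ cyclically and exploiting the fact that each corner-cut ($k_f=0$) face forces an interior vertex with two of its three edges crossed by $\mC$, while each conflict edge similarly forces such a local configuration at each of its two endpoints; the connectivity of $H$ together with $3$-regularity should then force every sufficiently long run of such configurations around $\mC$ to be interrupted by a shrinking face that absorbs it. Substituting the resulting inequality into the expression above yields the claimed bound $\cost(G,\mC)\le 12|\mathcal S|+3|\mathcal B|+O(1)$.
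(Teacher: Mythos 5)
First, note that the paper does not prove this statement: it is quoted verbatim from Allen et al.\ \citAllen, so your proposal can only be judged on its own merits. Your setup is sound and matches the standard accounting. The Euler/handshake computation on the induced interior subgraph $H$ is correct (interior edges cannot cross $\mC$, so the bounded faces of $H$ are exactly the faces of $\mathcal B$, giving $E_{\mathrm{in}}=E_{\mathrm{cross}}+3|\mathcal B|-3$), the identification of shrinking faces with crossed faces having $k_f\ge 2$ interior boundary edges is exactly right ($|f|$ drops from $k_f+\text{outer}+2$ to $\text{outer}+3$), and the bookkeeping for preserving operations and conflicts is arithmetically consistent, yielding $\cost\le 2q+2|\mathcal S|+2\gamma+3|\mathcal B|+O(1)$.

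The problem is that the entire content of the theorem is the inequality $q+\gamma\le 5|\mathcal S|+O(1)$, and you do not prove it -- you only say how you ``would attack'' it. This is not a routine verification. Your intended argument (``every sufficiently long run of corner-cut configurations along $\mC$ is interrupted by a shrinking face'') does not follow from connectivity and $3$-regularity alone: connectivity of $H$ only forbids two \emph{adjacent} corner-cut faces, and counting leaf/branch incidences in $H$ (each corner-cut face contributes a degree-$1$ vertex of $H$, each branch vertex of $H$ is an internal vertex of the inner path of each of its three incident faces) yields only $q\le r+|\mathcal S|+O(1)$, which is useless when $r$ is large, and the $r$-faces are exactly the ones you have already spent on preserving operations. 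A second concrete obstruction: your reduction consumes the entire $3|\mathcal B|$ budget in the Euler count, so the charging of corner-cut faces must land exclusively on shrinking faces; you must therefore rule out configurations where the face ``absorbing'' a corner-cut or a conflict is an enclosed face of $\mathcal B$ rather than a shrinking face, and nothing in your sketch addresses this. Until the inequality $q+\gamma\le 5|\mathcal S|+O(1)$ (or some workable variant) is actually established, the argument is incomplete at precisely its crux.
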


The following upper bound can be used to evaluate the number of combinatorial changes needed to update the graph of a Voronoi diagram when a new site is inserted.

\begin{theorem}[\citAllen] \label{thm:amorCost} 
	Consider one insertion of a new site to a Voronoi diagram $V$\!\!.
\begin{itemize}
	\item The number of cells of $V$\!\! undergoing combinatorial changes is $O(\noh )$ amortized in a sequence of insertions;
	\item There are a constant number of combinatorial changes per cell;
	\item The cells of $V$\!\! with combinatorial changes form a connected region.
\end{itemize}
\end{theorem}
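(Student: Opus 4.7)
By Theorem~\ref{thm:existsFlarb}, inserting $s_N$ is realized by applying the flarb to the current Voronoi graph $G$ along some flarbable curve $\mC$; an affected cell is precisely a face of $G$ that is either crossed by $\mC$ or lies entirely inside $\mC$. I would establish the three bullets separately.

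For connectedness, the simple closed Jordan curve $\mC$ together with its interior is a connected planar region; the affected cells exactly cover this region, so the corresponding subgraph of the cell adjacency graph is connected. For the $O(1)$ combinatorial changes per cell, the definition of flarbable ensures that $\mC$ meets each face in a path-connected set, so a cell crossed but not contained by $\mC$ has a single contiguous portion of its boundary replaced by one arc of $\mC$. A preserving operation of the form in Figure~\ref{fig:costfree} absorbs all but $O(1)$ of the intersections, leaving a constant number of genuine links and cuts per cell; cells wholly inside $\mC$ are destroyed and their $O(1)$ damage can be charged against their former boundary edges.

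For the amortized $O(\noh)$ bound on the number of affected cells, the count is at most $|\mathcal{S}(G,\mC)| + |\mathcal{B}(G,\mC)| + O(1)$ by the cost theorem of Allen et~al.\ quoted above. The $|\mathcal{B}|$ term is immediate to amortize: each face contained in $\mC$ is permanently destroyed, so its total contribution across $N$ insertions is at most $N$, giving $O(1)$ amortized. The main obstacle is bounding $|\mathcal{S}|$, the number of shrinking cells, amortized by $\noh$. My plan is to set up a potential function over cells---a natural candidate charges each cell in a way that shrinking releases credit proportional to the number of edges lost---and to combine it with a geometric lemma relating $|\mathcal{S}(G,\mC)|$ to the boundary complexity of the newly created cell for $s_N$. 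The $\noh$ bound then follows by balancing the credit released against the credit deposited into the new cell, using Euler's formula to cap the total Voronoi complexity by $O(N)$ and a Cauchy--Schwarz-type estimate to convert this global bound into an amortized per-insertion bound. Formalizing this potential argument is the technical heart of the statement and is the main content of Allen et~al.
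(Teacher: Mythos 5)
This theorem is imported verbatim from Allen et al.\ and the paper offers no proof of it, so there is nothing internal to compare against; the question is only whether your sketch would stand on its own, and it would not. The connectedness and ``$O(1)$ changes per cell'' bullets are argued plausibly (every affected cell meets the interior of the Jordan curve $\mC$, and flarbability forces $\mC\cap f$ to be a single arc), but the first bullet --- the $O(\noh)$ amortized bound, which is the entire content of the theorem and the reason the paper can hope for sublinear updates --- is only gestured at. You name ``a potential function'' and ``a geometric lemma'' without specifying either, and you explicitly defer the argument to Allen et al. That is an honest acknowledgment, but it means the technical heart is missing, not proved. Worse, the one concrete tool you do invoke points the wrong way: a Cauchy--Schwarz estimate on total Voronoi complexity $O(N)$ would, if it worked as you describe, yield an $O(1)$ or $O(\log N)$ amortized bound, contradicting the matching $\Omega(\noh)$ lower bound of Allen et al. Any correct potential must be calibrated so that a single insertion can \emph{raise} it by $\Theta(\noh)$ (this is where the $\noh$ actually comes from), and your sketch never confronts that step.

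There is also a smaller but genuine error in the reduction you do carry out: the number of affected cells is not bounded by $|\mathcal S(G,\mC)|+|\mathcal B(G,\mC)|+O(1)$. A cell with exactly one Voronoi vertex inside $\mC$ is crossed, requires a link (see Figure~\ref{fig:1inside}), and yet \emph{grows} by one edge, so it is neither shrinking nor contained in $\mC$. Such cells must be charged separately --- e.g., against the vertices of $G$ deleted by the flarb, whose total number over all insertions is $O(N)$ because each insertion creates at most two vertices. Without that extra charging argument (and the unproven potential bound on $|\mathcal S|$), the first bullet does not follow.
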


Further in this paper by {\itshape a change in cell} we always mean a combinatorial change, that is a {\itshape link} or a {\itshape cut}.

\section{Description of Data Structure} \label{sec:descrDS}

Our data structure consists of the following parts. \newcommand{\treename}{yard tree}

\begin{itemize}
	\item The graph $G_N$ of the Voronoi diagram represented by its adjacency list: for each Voronoi vertex $v$ we store the list of all Voronoi vertices connected to $v$. Since the sites are in general position, each list has length 3, therefore we can find and replace its elements in constant time. Thus any link or cut can be performed in constant time as well as insertion or deletion of a vertex of degree 2.
	\item For each vertex $v$ its {\itshape data} $D_v$ is stored. It is a list of the three sites that define the Voronoi circle of $v$, that is, the sites whose cells are incident to $v$.
	\item A dynamic nearest neighbor structure (DNN)~\cite{Chan19} for the sites which supports insertion and deletion of sites and nearest neighbor queries in $\Ot(1)$ amortized time. 
	\item The {\itshape graph $\Gamma_N$ of big cells} which is simply the dual graph to the subgraph of $G_N$ formed by big cells. Vertices of $\Gamma_N$ are big cells themselves and edges connect vertices corresponding to pairs of big cells that are adjacent. Graph $\Gamma_N$ has $O(\ntf)$ edges, since it is a planar graph of at most $\ntf$ vertices. For each pair of adjacent big cells $b_1$, $b_2$ we also store two Voronoi vertices they share. We store graph $\Gamma_N$ as an adjacency list, where for each vertex, its edges are stored in a binary search tree ordered counterclockwise around the corresponding big cell. The vertices of $\Gamma_N$ are stored in a binary search tree. This allows us to access any edge of $\Gamma_N$ in $\Ot(1)$ time.
	\item For each big cell $b_i$ store a circular linked list of $\Theta ( |b_i|\,/\,\nof )$ data structures each associated with a consecutive range of $O(\nof)$ \paws of $B_i$, see Figure~\ref{fig:consecDataStr}. Each structure stores the Voronoi circles of the relevant paws of $b_i$ (recall that a paw is relevant if it is not incident to a big cell, see Definition~\ref{def:Paw}).

\begin{figure}[h] \centering
	\includegraphics[width=6cm]{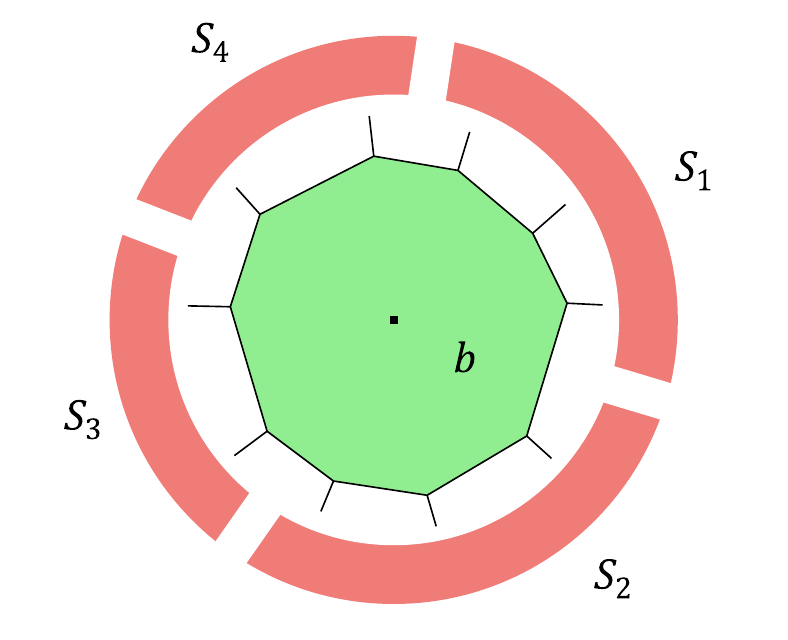}
	\caption{$b$ is a big cell; each of data structures $S_1 \ldots S_4$ is associated
	with a consecutive range of its \paws and stores Voronoi circles of the relevant ones.}
	\label{fig:consecDataStr}
\end{figure}

	The collections of circles are stored using \emph{dynamic circle-reporting structures (DCRs)} that are variants of the DNN structure constructed in~\cite{DBLP:journals/dcg/AllenBIL17}. DCRs support insertion and deletion of circles in time $\Ot(1)$, and given a query point, report all $k$ circles containing the point in time $\Ot(k)$.

	\item For each big cell $b_i$ a {\it\treename} $T_{b_i}$ supporting the following operations in $\Ot (1)$ time:
\begin{itemize}
	\item for a specified continuous range $v_1 \ldots v_m$ of vertices of $b_i$ updating $D_{v_1} \ldots D_{v_m}$, changing $s_i$ to a given site $s_j$.
	\item removing a continuous range of vertices from $b_i$ and create a new cell with these vertices preserving their order {\itshape (split)},
	\item merging the trees that correspond to big cells $b_i$ and $b_j$ (when two cells are merged their common edge is deleted), so that the same operations can apply to the resulting tree.
\end{itemize}
	One can use link-cut trees~\cite{ST83} or a collection of red-black trees with two-way pointers for this purpose, see Cormen et al.~\cite{cormen-iii} for details.

	\item For each cell $f_i$ we need to store its size $|f_i|$. 
\end{itemize}

\section{Insertion of a site}

We aim to implement the update of graph $G_{N-1}$ to become $G_N$ when a {\it new site} $s_N$ is added to the Voronoi diagram. Our goal is to quickly locate the affected cells that need combinatorial changes, and to avoid processing the other cells. When the cells that need changes are located, we implement these changes using the techniques of~\cite{DBLP:journals/dcg/AllenBIL17}.

Let the cell of the new site $s_N$ be called $f_N$. We denote the boundary of $f_N$ by $\mC_N$. According to Theorem~\ref{thm:existsFlarb}, what we are about to perform is the {\itshape flarb} operation on graph $G_{N-1}$ of the current Voronoi diagram and curve $\mC_N$.

\def\fdn{f_{\mathrm{dnn}}}

We first use the DNN structure to locate one Voronoi cell, call it $\fdn$, that must change—the one whose site is the closest to newly added $s_N$. We then add $s_N$ to the DNN. Finally we create the queue with all big cells of $G_{N-1}$ and cell $\fdn$. This whole procedure takes $\Ot (1)$ time as the list of all big cells is already stored.

We then remove each cell $f$ from the queue, process it, and add into the queue the small cells neighboring $f$ with unprocessed changes. 
We do not have to add  big cells neighboring $f$ as all of them were already in the initial queue and thus will be processed. 
Figure~\ref{fig:algInsertSite} shows a pseudocode of this procedure.

\def\Qar{\text{\tt Q}}
\def\Ch{\text{\tt Changed}}

\begin{figure}[h] \centering
     \hrule\medskip
\begin{algorithmic}[1]
	\State $\Qar \coloneqq \text{queue of all big cells}$
	\State $\Ch \coloneqq \text{empty array of cells}$
	\State $\Qar$.append\,$(\fdn)$
	\State $\text{DNN.insert}(s_N)$ \medskip
    \While{not $\Qar$.empty}
	\State $f \coloneqq \Qar.\text{dequeue}$
	\State $\Ch.\text{append}(f)$
	\If{$f$ is big}
	    \State add $f$'s small neighbors that need changes \par
		\hspace{0.95cm} to $\Qar$ using Section~\ref{sec:recogBig}
	\Else \hspace{6.25cm} $\Asterisk$ $f$ is small
	    \State add $f$'s neighbors that need changes \par
		\hspace{0.95cm} to $\Qar$ using Section~\ref{sec:recogSmall}
	\EndIf
    \EndWhile \medskip
	\State implement changes in cells in $\Ch$ \par
	    as described in Section~\ref{sec:implement}
	\State Some small cells have become big and some big have become small,\par
	    fix corresponding data structures as described in Section~\ref{sec:smallBig}
\end{algorithmic} \smallskip\hrule\medskip
	\caption{Pseudocode describing insertion of new site $s_N$}
	\label{fig:algInsertSite}
\end{figure}

\section{Recognizing cells with changes} \label{sec:recogGeneral}

Let $f$ be a cell with combinatorial changes. We can identify the neighboring cells of $f$ that change using the following theorem:

\begin{theorem}[\cite{DBLP:journals/dcg/AllenBIL17}] \label{thm:identChanges}
	Let $g$ be a cell adjacent to $f$. Let $v_1$, $v_2$ be the vertices of $g$ that are \paws of $f$. Cell $g$ needs to undergo combinatorial changes if and only if the Voronoi circle of $v_1$ or $v_2$ encloses $s_N$.
\end{theorem}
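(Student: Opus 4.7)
The plan is to exploit the following elementary geometric fact: a Voronoi vertex $v$ of the old diagram is killed by the insertion of $s_N$ (i.e., $v$ lies in the interior of the new cell $f_N$ and is no longer a Voronoi vertex) if and only if $s_N$ lies strictly inside $v$'s Voronoi circle. This holds because the three defining sites of $v$ all lie at the common radius, so $s_N$ being inside the circle is equivalent to $s_N$ being strictly closer to $v$ than those sites.

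The direction $(\Leftarrow)$ is then immediate. If the Voronoi circle of $v_1$ (or symmetrically $v_2$) encloses $s_N$, then $v_1$ is killed; since $v_1$ is a vertex of $g$, its disappearance together with the modification of its two incident edges along $\partial g$ constitutes a combinatorial change in $g$.

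For the direction $(\Rightarrow)$, I would argue contrapositively. Assuming neither $v_1$ nor $v_2$'s Voronoi circle encloses $s_N$, I aim to show $g$ is unchanged. Cell $g$ changes precisely when the bisector $\ell$ of $s_g$ and $s_N$ enters the convex region $g$, which forces some vertex of $g$ to lie strictly on the $s_N$-closer side of $\ell$: indeed, $\ell$ meets $\partial g$ in at most two points, each on a distinct edge, so the cut-off sliver must contain at least one vertex of the convex polygon. That vertex would then be killed, i.e., its Voronoi circle would enclose $s_N$. It remains to show this killed vertex must be $v_1$ or $v_2$. Here I would invoke that $f$ is itself affected, together with the flarbability condition that $\mathcal{C}_N$ meets each face in a connected arc and the convexity of $f_N$: these together anchor the absorbed region $f_N \cap g$ against the shared edge $(a,b)$, so the killed arc of vertices on $\partial g$ must reach $v_1$ or $v_2$ (the neighbors on $\partial g$ of the shared corners $a,b$) before reaching any deeper vertex of $g$.

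The main obstacle is this final localization step. Showing rigorously that a killed vertex of $g$ cannot lie deeper along $\partial g$ without one of $v_1, v_2$ also being killed requires carefully combining the convexity of $f_N \cap g$, the flarbability hypothesis linking $\mathcal{C}_N$'s connected arcs in $f$ and $g$, and the fact that consecutive crossings of $\mathcal{C}_N$ with the Voronoi graph occur around shared vertices of adjacent affected cells.
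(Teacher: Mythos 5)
The paper does not prove this statement: it is imported verbatim from Allen et al.~\cite{DBLP:journals/dcg/AllenBIL17}, so there is no in-paper argument to compare yours against. Judged on its own terms, your proposal is not yet a proof. The preliminary fact (a Voronoi vertex $v$ is destroyed iff $s_N$ lies strictly inside its Voronoi circle) and the $(\Leftarrow)$ direction are essentially fine, but for $(\Rightarrow)$ you stop at the ``localization step'' and only list the ingredients (convexity of $f_N\cap g$, flarbability, connectedness of the crossing pattern) that would have to be combined. That step is the entire content of the theorem, so what you have is a correct reduction of the statement to itself together with an accurate description of why the remaining step is hard.

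Moreover, the missing step is genuinely delicate, not a routine verification. Your ``anchoring'' heuristic says the absorbed region $f_N\cap g$ is pressed against the edge shared by $f$ and $g$ and must therefore swallow $v_1$ or $v_2$ before any deeper vertex of $g$. But the absorbed region can sit entirely at a \emph{corner} of that shared edge: if $a$ is an endpoint of the common edge of $f$ and $g$ and $s_N$ lies inside the Voronoi circle of $a$ but of no other vertex (the standard situation in which exactly one Delaunay triangle is in conflict with the new site), then $g$ loses its corner at $a$ and acquires an edge along $\mathcal{C}_N$, yet neither $v_1$ nor $v_2$ is killed, since both lie outside the affected region. Any complete proof must explain how such configurations are reconciled with the stated equivalence --- note that the algorithm of Section~\ref{sec:recogGeneral} does treat the cells reached through an edge crossed by $\mathcal{C}_N$ separately, as $f_{\mathrm{Left}}$ and $f_{\mathrm{Right}}$ --- and your sketch does not engage with this case at all. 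A second, smaller issue: in $(\Leftarrow)$ you equate ``a vertex of $g$ is killed'' with ``$g$ undergoes a combinatorial change,'' but Section~\ref{sec:procSmall} of the paper exhibits cells with two killed vertices that require no links or cuts (the preserving operation), so you would also need to fix precisely which notion of ``change'' the theorem uses before either direction is airtight.
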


See Figure~\ref{fig:identChanges} for an example. Cell $f$ is a cell with changes, $n_1$ and $n_2$ are its paws. The Voronoi circle of $n_1$ encloses the new site $s_N$ and the one of $n_2$ does not. Therefore cells $f_1$ and $f_2$ need combinatorial changes as they are incident to vertex $n_1$, and cell $f_3$ does not need any changes.

We now consider separately the case when cell $f$ is a big cell (Section~\ref{sec:recogBig}) and the case when it is a small cell (Section~\ref{sec:recogSmall}).

\subsection{Cell $f$ is big} \label{sec:recogBig}

We use DCRs of cell $f$: they return all the relevant \paws of $f$ whose Voronoi circles enclose $s_N$. Small cells that are incident to these \paws and are adjacent to $f$ need combinatorial changes and thus have to be added to queue $\Qar$.

\begin{figure}[h] \centering
\begin{subfigure}[m]{0.48\textwidth} \centering
	\includegraphics[width=6.4cm]{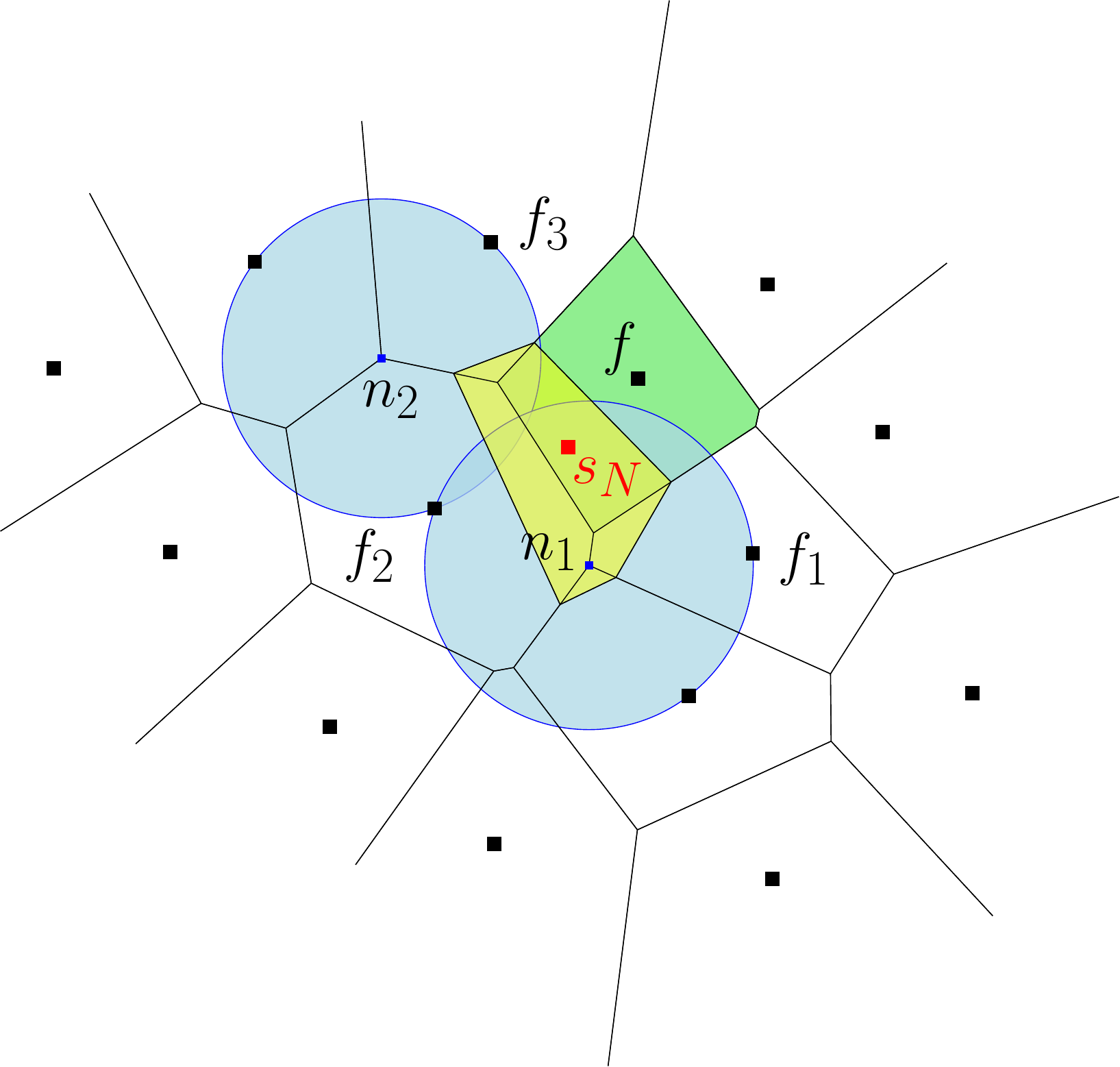}
	\caption{ }
	\label{fig:identChanges}
\end{subfigure} ~~
\begin{subfigure}[m]{0.42\textwidth} \centering
	\includegraphics[scale=1.1]{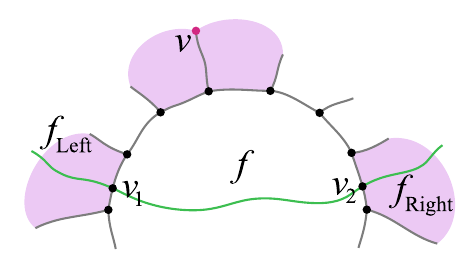}
	\caption{ }
	\label{fig:whomToAdd}
\end{subfigure}
	\caption{Identifying Voronoi cells that need changes. (a) Voronoi circle
	     of vertex~$n_1$ encloses~$s_N$, and the circle of~$n_2$ does not.
	     (b)~$v$ is a \paw of big cell~$f$ returned by a DCR. Highlighted are
	     the cells that are to be added to the queue.}
\end{figure}

Two cells are to be considered separately: those that are neighboring $f$ through an edge that is crossed by $\mC_N$. Denote them by $f_{\mathrm{Left}}$ and $f_{\mathrm{Right}}$, see Figure~\ref{fig:whomToAdd}. If they are small, we check whether the Voronoi circles of at most four \paws of $f$ incident to them (call these \paws $p_1 \ldots p_4$) enclose $s_N$, and, if yes, add the corresponding cell to the queue. To find Voronoi circles of these \paws we get the data $D_{p_1} \ldots D_{p_4}$ from the structures $T_{b_i}$ associated with big cells adjacent to $f_{\mathrm{Left}}$ and $f_{\mathrm{Right}}$, which requires \otl time.

\subsection{Cell $f$ is small} \label{sec:recogSmall}

We can look at every \paw $n_i$ of $f$ and identify those, whose Voronoi circle encloses $s_N$. This requires $\Ot(\nof)$ time in total. We add to $\Qar$ small cells adjacent to $f$ that are incident to these paws as they need changes according to Theorem~\ref{thm:identChanges}.

\section{Implementing combinatorial changes} \label{sec:implement}

In this section we describe how to implement combinatorial changes in a cell $f$ which lies in $\Ch$. We again consider separately the case when $f$ is big (Section~\ref{sec:procBig}) and the case when $f$ is small (Section~\ref{sec:procSmall}).

\subsection{Processing big cells} \label{sec:procBig}

Processing a big cell $f$ consists of the following four steps:

\paragraph{Updating the vertices.} We update a continuous range of $f$'s vertices $v_1 \ldots v_k$—we need to change their data $D_{v_1} \ldots D_{v_k}$ to indicate that these vertices are now incident to the cell of $s_N$ and not the cell of $s$. This can be done in $\Ot(1)$ time using the \treename~$T_f$.

\paragraph{Updating the graph of the Voronoi diagram.} First thing to do is a {\itshape link} along $\mC_N$ creating two vertices: vertex $v_1$ incident to $f$, $f_N$, $f_{\mathrm{Left}}$, and vertex $v_2$ incident to $f$, $f_N$, $f_{\mathrm{Right}}$ (see Figure~\ref{fig:whomToAdd}). After this {\itshape link} the part of $f$ inside $\mC_N$ becomes a part of the new cell $f_N$—luckily, all vertices of this part are already updated during the previous step.

There may be some big cells adjacent to $f$ that are already processed, creating other parts of the new cell. We have to join these parts together by cutting edges of $f$ that have portions inside $\mC_N$ and are incident to already processed big cells. Finding these edges in a straightforward way could be slow as $f$ can have a really large number of edges inside $\mC_N$ and we do not have enough time to look at each of them individually. Luckily, graph $\Gamma_{N-1}$ contains the information about edges shared by big cells. Thus we can in $\Ot(1)$ time find and delete edges incident to both $f$ and already processed big cells inside $\mC_N$. The edges shared by $f$ and other big cells inside $\mC_N$ will be deleted when these other big cells will be processed.

\def\fn{f_N^{(f)}}

\paragraph{Updating the graph of big cells.} Two operations we just carried out—split of $f$ by the new edge $v_1 v_2$ and joining of some cells that are parts of $f_N$—can change the set of big cells and add or cut some connections between them. However, $\Gamma_{N-1}$ can be updated accordingly in \otl time when such an operation is executed. It can be done as follows:

While undergoing a split, vertex $f$ falls apart into two vertices: $f'$ and $\fn$ (the latter represents a part of the new cell). They share newly created edge $v_1v_2$ of the Voronoi diagram. Note that the cells adjacent to $\fn$ form a continuous range of cells that were adjacent to $f$.

Thus we need \otl time to cut a continuous range from the binary search tree of cells adjacent to $f$, \otl time to add a new edge between $f'$ and $\fn$ to their binary search trees and \otl time to re-balance the binary search tree of all big cells.

Joining can be also done in \otl time. When two cells $f_1$, $f_2$ are joined we remove a node corresponding to $f_1$ from binary search tree of $f_2$ and vice versa, this takes \otl time. We then join the trees of $f_1$ and $f_2$ also in \otl time since cells that were adjacent to $f_1$ form a continuous range of cells adjacent to the new one.

\paragraph{Fixing data structures.} There are two data structures associated with $f$ that have to be considered:

\begin{itemize}
	\item $T_f$ can be updated in \otl time the same way we did with the graph of big cells.

	\item DCRs of relevant \paws\!\!: when big cells are joined or split, most of DCR-s stay intact. The only DCRs that need to be rebuilt are those whose range contains the endpoints of the edge that is either cut or added. Rebuilding of a DCR takes $\Ot(\nof)$ time since at most $\O(\nof)$ circles are stored there.
\end{itemize}

\subsection{Processing small cells} \label{sec:procSmall}

A small cell is different from a big cell in that we can consider every edge of it, and it will take us $\Ot(\nof)$ time. We will implement the combinatorial changes in $f$, and after this we update the DCRs of neighboring big cells.

We can in time $\Ot(\nof)$ distinguish whether $f$ has one, two, or more vertices inside the new cell (if they exist). Below we describe these three cases separately.

	\paragraph{One vertex inside the new cell.} (See Figure~\ref{fig:1inside-a}.) Let $f_i$, $f_k$ be the cells adjacent to $f$ that share an edge with $f$ inside $\mC_N$. Let those edges be called $e_i$, $e_k$ respectively.

	It is certain that neither $f_i$ nor $f_k$ have been processed yet: if $f_i$ is processed then there would be a vertex $v = \mC_N \cap e_i$. Then we have to create the face in the graph that is separated from $f$, $f_i$, $f_k$, bounded by $\mC_N$, and is a part of the new cell $f_N$.

\begin{figure}[h] \centering
	\includegraphics[scale=1.1]{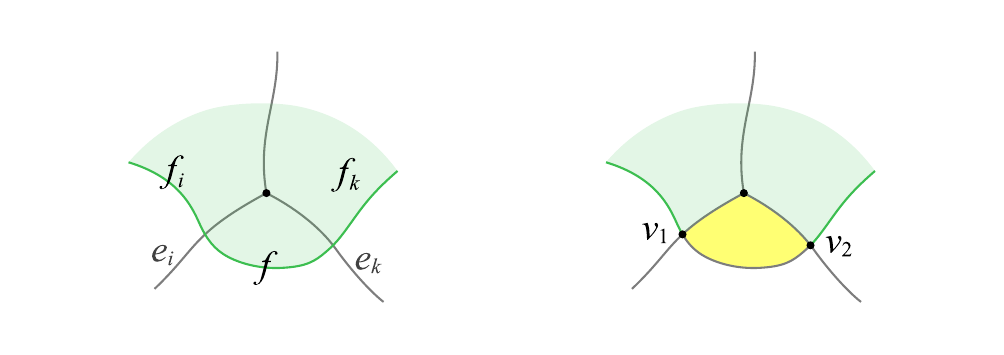}
	\abcaption{fig:1inside-a}{fig:1inside-b}
	\caption{Processing a cell with one vertex inside the new cell}
	\label{fig:1inside}
\end{figure}

	To do so, we perform a {\itshape link} operation inside $f$ along $\mC_N$: we create new vertices $v_1$ on $e_i$, $v_2$ on $e_k$ and add an edge $v_1v_2$ to $G_{N-1}$, see Figure~\ref{fig:1inside-b}. $v_1$ is incident to the cells of sites $s$, $s_i$ and $s_N$; $v_2$ is incident to the cells of $s$, $s_j$ and $s_N$.
	
	\paragraph{Two vertices inside the new cell.} We check whether cells adjacent to $f$ that share an edge with it inside $\mC_N$ has been already processed. If not (see Figure~\ref{fig:2inside-na}), we perform a {\itshape link} operation inside $f$ similarly to the previous paragraph, see Figure~\ref{fig:2inside-nb}.

\begin{figure}[h] \centering
	\includegraphics[scale=1.1]{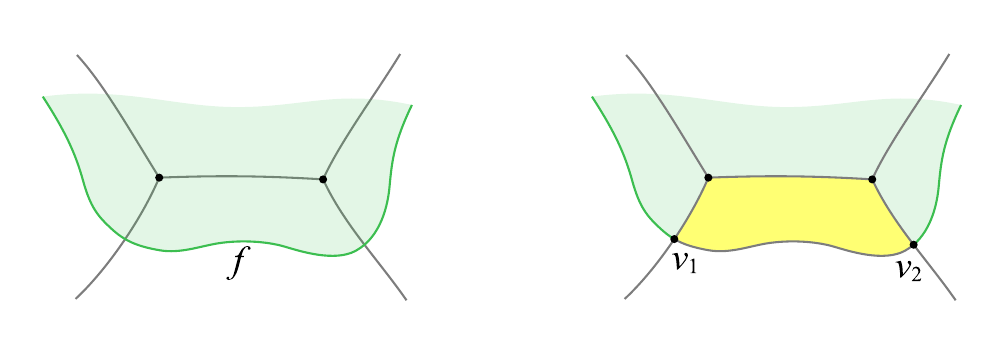}
	\abcaption{fig:2inside-na}{fig:2inside-nb}
	\caption{Processing a cell with two vertices inside the new cell when no surrounding cells are yet processed}
	\label{fig:2inside-n}
\end{figure}

Otherwise let us denote three faces sharing an edge with cell $f$ inside curve $\mC_N$ by $f_1$, $f_2$, $f_3$, see Figure~\ref{fig:2inside-ya}.

\begin{lemma}
	It is only $f_2$ that can have been already processed.
\end{lemma}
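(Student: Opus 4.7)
Plan. My approach is to distinguish the structural role of $f_2$ from those of $f_1$ and $f_3$ with respect to $\mC_N$, and then to control the order of implementation for the latter two via the BFS discipline. By Theorem~\ref{thm:identChanges}, each of $f_1, f_2, f_3$ lies in $\Ch$, since each is incident to at least one of $u_1, u_2$, the two vertices of $f$ strictly inside $\mC_N$ (whose Voronoi circles therefore enclose $s_N$). So the lemma is really a statement about the relative order of implementation within $\Ch$.

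The first step is the easy direction: explaining why $f_2$ may legitimately have been processed earlier. The edge $\overline{u_1 u_2}$ shared by $f$ and $f_2$ lies entirely inside $\mC_N$, and is therefore not a crossing edge of $f_2$. Inspecting every implementation routine in Sections~\ref{sec:procBig} and~\ref{sec:procSmall}, one sees that every new on-$\mC_N$ vertex they introduce sits on a crossing edge of the cell currently being processed. Thus the prior processing of $f_2$ leaves $f$'s combinatorial boundary untouched, and $f$'s ``two vertices inside'' classification as well as the subsequent steps on $f$ remain unaffected by whatever has happened to $f_2$.

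The harder direction is to show that $f_1$ and $f_3$ cannot have been processed yet; by symmetry I focus on $f_1$ and argue by contradiction. Suppose $f_1$ has already been implemented. Since the edge $\overline{a u_1}$ shared with $f$ (where $a$ is the outside vertex of $f$ adjacent to $u_1$) is a crossing edge of $f_1$, its implementation has necessarily inserted a new on-$\mC_N$ vertex $v^{\star} = \mC_N \cap \overline{a u_1}$ on $f$'s boundary. I would then appeal to the discovery rule of Theorem~\ref{thm:identChanges} together with the FIFO discipline of $\Qar$: any cell whose processing creates $v^{\star}$ has $u_1$ as one of its paws, and $u_1$'s Voronoi circle contains $s_N$, so by Theorem~\ref{thm:identChanges} that cell enqueues $f$ no later than it is itself dequeued. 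Hence $f$ is dequeued — and therefore implemented — before $f_1$, contradicting the assumption.

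The main obstacle I foresee is handling the subcase in which $f_1$ is a big cell and is therefore placed in the initial queue $\Qar$ independently of any discovery triggered from $u_1$. In that regime the BFS/FIFO argument just sketched has to be recast: one must verify either that a big cell, when implemented, postpones the on-$\mC_N$ vertex creation on any crossing edge shared with an unprocessed shrinking small neighbour (deferring the relevant link/cut to that small cell's turn), or that the mere presence of $f_1$ in the initial queue already enqueues $f$ via Theorem~\ref{thm:identChanges} early enough that the order of $\Ch$ still lists $f$ before $f_1$. Pinning down this scheduling invariant is the most delicate step of the plan.
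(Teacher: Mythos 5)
You have the paper's key fact in hand --- implementing $f_1$ necessarily creates a new vertex $v^\star$ at the point where $\mC_N$ crosses the edge shared by $f_1$ and $f$ --- but you then abandon it for a global scheduling argument instead of finishing locally. The paper's proof ends exactly where your third paragraph begins: $v^\star$ would be a third vertex of $f$ inside $\mC_N$, contradicting the case hypothesis that $f$ has exactly two such vertices; the same one line handles $f_3$, and nothing at all needs to be proved about $f_2$, since the lemma only asserts that $f_2$ \emph{can} have been processed. The lemma is conditional on the classification of $f$ (``two vertices inside''), and it is that classification which rules out a previously processed $f_1$ --- not the order in which the queue happens to serve the cells.

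The scheduling route you substitute has two genuine defects. First, the FIFO inference is reversed: if the cell that creates $v^\star$ enqueues $f$ at the moment it is itself dequeued, then $f$ enters the queue only as that cell leaves it, so $f$ is dequeued --- and hence implemented --- \emph{after} that cell, not before; no contradiction follows. Second, the big-cell case you flag as ``the most delicate step'' is not a technicality to be pinned down later: all big cells are placed in the initial queue and are therefore implemented before the small cells discovered by the search, so no ordering invariant of the kind you want can hold. When a big $f_1$ really is processed first, the vertex it creates on the shared edge changes the classification of $f$, which is then handled by the ``more vertices inside'' case (where the paper explicitly deals with already-processed neighbours) rather than by this lemma. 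So the counting contradiction is not merely simpler than your plan; it is the only one of the two arguments that is correct.
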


\begin{proof}
	Suppose $f_1$ is processed. It must then have an edge along $\mC_N$. It implies that there is a vertex where $\mC_N$ meets the common edge of $f_1$ and $f$. This vertex becomes the third vertex of $f$ inside $\mC_N$. However, $f$ has only two such vertices, which is a contradiction.
\end{proof}

\begin{figure}[h] \centering
	\includegraphics[scale=1.1]{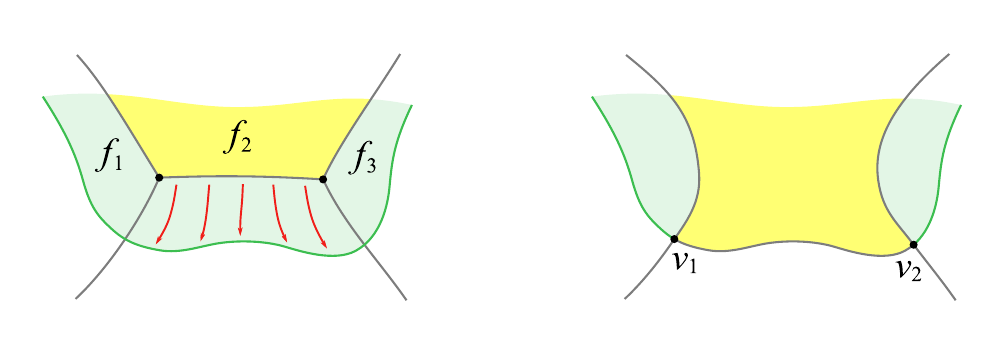}
	\abcaption{fig:2inside-ya}{fig:2inside-yb}
	\caption{Processing a cell with two vertices inside the new cell when there is a processed neighboring cells—no structural changes are needed}
	\label{fig:2inside-y}
\end{figure}

If $f_2$ is processed and is part of $f_N$ then the data $D_{v_1}$, $D_{v_2}$ of its vertices was updated when we were processing it. Therefore $f$ does not need to undergo any combinatorial changes, the common edge of $f$ and $f_N$ can be obtained by preserving operation which was described in Section~\ref{sec:defflarb}, see Figure~\ref{fig:costfree}. Thus no links and cuts are required, see Figure~\ref{fig:2inside-yb}. This completes implementing changes in $f$.

	\paragraph{More vertices inside the new cell.} Again we check whether any of the cells adjacent to $f$ has been processed already. If not, it is enough to perform one {\itshape link} creating two new vertices $v_1$, $v_2$ and to update the data $D_{v_j}$ of all the vertices of cell $f$ between $v_1$ and $v_2$: now they are incident to the cell of $s_N$, see Figure~\ref{fig:3inside-n}.

\begin{figure}[h] \centering
	\begin{subfigure}[t]{0.42\textwidth}
	\centering
	\includegraphics[scale=1.1]{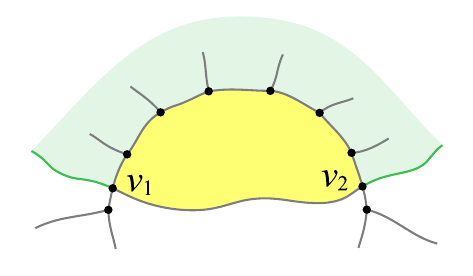}
	\caption{\ } \label{fig:3inside-n}
	\end{subfigure}
\quad
	\begin{subfigure}[t]{0.42\textwidth}
	\centering
	\includegraphics[scale=1.1]{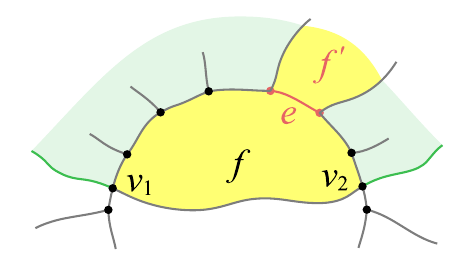}
	\caption{\ } \label{fig:3inside-y}
	\end{subfigure}
\caption{Processing a cell with three or more vertices inside the new cell. \\
	(a) No adjacent faces have been processed yet \\
	(b) Adjacent face $f'$ has been processed}
\label{fig:3inside}
\end{figure}

	If some cells sharing an edge with $f$ inside $\mC_N$ are already processed and represent a part of new cell, then for each processed cell $f'$ adjacent to $f$ we also perform a {\itshape cut} removing their common edge $e$ and then remove vertices incident to this edge that now have degree 2, see Figure~\ref{fig:3inside-y}.

\paragraph{Updating the DCRs of big neighbors of $f$.} The last step is that for every vertex $v$ of $f$ whose list of adjacent cells has changed during update of $G_{N-1}$ we find all big cells for which $v$ is a paw (there are at most three such cells, since $v$ has degree 3), recalculate the Voronoi circle of $v$, and update Voronoi circle of $v$ in DCRs of those cells which takes \otl time.

\section{When small cells become big} \label{sec:smallBig}

When a size of a cell crosses the threshold of $\nof$, it can be easily identified since we store all the sizes. If a big cell $b$ is split into a number of cells and one of them is small, or if $N$ becomes greater than $|b|^4$, we delete all the structures associated with it, including DCRs and the structure $T_f$. We also remove from $\Gamma_{N-1}$ the vertex corresponding to $b$.

Other way around, a new big cell can appear in the diagram when:
\begin{itemize}
	\item the new cell $f_N$ intersects many of old cells and has more than $\nof$ vertices, or
	\item a cell $f_k$ with $\nof-1$ vertices has one vertex inside $f_N$ and gets one additional vertex while being processed, see Figure~\ref{fig:1inside}.
\end{itemize}

New cell $f_N$ inherits the portion of its DCRs from its parts that previously were parts of big cells. The number of circles of \paws of previously small cells that need to be added to DCRs can be bounded from above by the size of a small cell times number of cells that undergo changes — that is,
	$$\noh\cdot\nof = \ntf.$$
	
The structure $T_{f_N}$ is inherited in part by $f_N$ from big cells that intersect curve $\mC_N$. The number of vertices that have to be added to $T_{f_N}$ after that is bounded from above by the number of combinatorial changes in current insertion.

The cell $f_k$ still has size $|f_k| \le 2\nof$, so \treename~$T_{f_k}$ can be built in $\Ot(\nof)$: it only takes time polylogarithmic in the size of the cell to add each vertex.

\section{Correctness and time complexity}

\begin{theorem} \label{thm:timeComplexity}
	Inserting a new site $s_N$ to our data structure and updating it requires $\Ot (\ntf)$ amortized time.
\end{theorem}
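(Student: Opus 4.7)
The plan is to bound the time of each step in the pseudocode of Figure~\ref{fig:algInsertSite} separately and then sum. I would rely on three structural facts: (i) at any moment the diagram has $O(\ntf)$ big cells, since every big cell has size exceeding $\nof$ while the total size of all cells is $O(N)$; (ii) every small cell has at most $\nof$ \paws; and (iii) by Theorem~\ref{thm:amorCost} the number of cells with combinatorial changes per insertion is $O(\noh)$ amortized, with $O(1)$ changes per affected cell.

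First I would account for the initialization. Enqueuing all big cells together with $\fdn$ and performing the nearest-neighbor lookup/insertion costs $\Ot(\ntf)$ by (i), and this is paid in full by every insertion irrespective of how many cells actually change. For the recognition phase of Section~\ref{sec:recogGeneral}, each small cell dequeued costs $\Ot(\nof)$ by (ii), and by (iii) the $O(\noh)$ small cells together contribute $\Ot(\noh\cdot\nof)=\Ot(\ntf)$ amortized; each big cell dequeued costs $\Ot(k+1)$ through its DCRs, where $k$ counts the changed small neighbors reported, so the total over all big cells is $\Ot(\ntf+\noh)=\Ot(\ntf)$.

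Next I would bound Section~\ref{sec:implement}. Each changed small cell is processed, including updating the DCRs of its big neighbors, in $\Ot(\nof)$ by (ii); across the $O(\noh)$ changed small cells this sums to $\Ot(\ntf)$. Each changed big cell costs $\Ot(1)$ for the \treename, the graph of big cells, and the Voronoi graph updates, plus the cost of rebuilding the $O(1)$ DCRs whose paw-ranges contain the newly cut or linked endpoints, each taking $\Ot(\nof)$; over the $O(\noh)$ changed big cells this also contributes $\Ot(\ntf)$. For Section~\ref{sec:smallBig}, populating the DCRs of a newly big $f_N$ inserts at most $O(\noh\cdot\nof)=O(\ntf)$ circles, building $T_{f_k}$ for a newly big $f_k$ costs $\Ot(\nof)$, and deletions of obsolete structures when big cells become small can be charged to their earlier creation.

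The main obstacle will be the amortized accounting of the DCR rebuilds triggered when big cells split or merge and when cells cross the small/big threshold. I must check carefully that only $O(1)$ DCRs per big-cell change need rebuilding, regardless of how drastically the cell is reshaped, and that the structures inherited by $f_N$ from previously big parts really do carry over without recomputation. Once these charges are pinned to either the causing combinatorial change or the initialization budget of the current insertion, summing through (iii) yields the claimed $\Ot(\ntf)$ amortized bound.
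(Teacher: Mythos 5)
Your proposal is correct and follows essentially the same accounting as the paper's proof: a flat $\Ot(\ntf)$ charge for the $O(\ntf)$ big cells, an $\Ot(\nof)$ cost per changed small cell multiplied by the $O(\noh)$ amortized bound on changed cells, and DCR rebuilds charged to the combinatorial changes. One minor imprecision: the total number of circles reported by all DCRs is bounded by $s\cdot\nof=\Ot(\ntf)$ rather than $O(\noh)$ (a changed small cell may be adjacent to up to $\nof$ big cells and hence be reported that many times), but this does not affect your final bound.
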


\begin{proof}
	Let $s$ be the number of small cells that change, and $b_1, b_2, \ldots, b_{|B|}$ be the big cells. Let $\ell_i$ be the number of circles returned by the DCR structures of $b_i$.
	
	All the operations on a small cell take $\Ot (\nof)$ time. For all the big cells together the operations on updating the graph structure and the graph of big cells require $\Ot(\ntf)$ total time. The number of DCR-s that have to be rebuilt is bounded from above by the number of changes in the graph.
	
	Finally, the amortized time complexity is
\[
\Ot \lrp{
s\nof +
\sum_{i=1}^{|B|}  \lrp{ \lrc{\frac{|b_i|}{\nof }}+\ell_i}
+ \ntf + s\nof
}.
\]

Since $s$ is $O(\noh)$ amortized \cite{DBLP:journals/dcg/AllenBIL17},  
$\sum_{i=1}^{|B|} |b_i| \leq N$, 
$|B|\leq \ntf$, and 
$\sum_{i=1}^{|B|} \ell_i \leq s \nof$, 
this is simply $\Ot(\ntf)$ amortized.

\end{proof}

\section{Discussion}

The problem of maintaining the convex hull of a set of points in $\br^3$ subject to point insertion can also be solved using our data structure. Namely, consider the dual problem of maintaining the intersection of a set of halfspaces. The two blocks of our data structure that are specific for Voronoi diagrams, translate in this setting as follows. To find the first face affected by the insertion (or report that it does not exist) we need to find the vertex extreme in the direction normal to the plane being inserted; if it is affected, then all the three incident faces are affected. We check whether a vertex is affected by determinimg the above/below relation between this vertex and the plane baing inserted. Thus Chan’s structure~\cite{Chan19} is again enough for our needs.

There remains a gap between the $\Ot (N^{3/4})$ expected amortized runtime of our structure and the $\Theta\lrp{\sqrt{N}}$ amortized number of combinatorial changes to the Voronoi diagram. Also, it would be interesting to get output-sensitive bounds, where the update time depends on the number of combinatorial changes. This was achieved by Allen et. al.~\cite{DBLP:journals/dcg/AllenBIL17} for points in convex position, where their update time is $O(K \log^7 N )$, where $K$ the number of combinatorial changes. We are unable to show this using our technique, due to the fact that we need to process all $\Theta (N^{3/4})$ big cells, no matter how many of them undergo combinatorial changes.
	
\section{Acknowledgments} This work was completed during the Second Trans-Siberian Workshop on Geometric Data Structures. We thank the organizers and staff of Russian Railways (\raisebox{-0.45ex}{\includegraphics{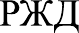}}) for creating an ideal research environment.

S.~L. is the Directeur de recherches du F.R.S-FNRS. J.~I. and G.~K. are supported by the Fonds de la Recherche Scientifique-FNRS under Grant no. MISU F 6001 1. E.~A. and B.~Z. are partially supported by the Foundation for the Advancement of Theoretical Physics and Mathematics ``BASIS'' and by ``Native towns'', a social investment program of PJSC ``Gazprom Neft''. J.~I. is supported by NSF grant CCF-1533564.

\vspace{-8pt}

\bibliographystyle{abbrv}
\bibliography{chan,john,boris}
	
\end{document}